\documentclass{easychair}

\usepackage{doc}

\usepackage{graphicx}
\usepackage{amsmath}
\usepackage{amsthm}
\usepackage{booktabs}
\usepackage{algorithm}
\usepackage[noend]{algorithmic}
\usepackage{adjustbox}

\usepackage{tikz}
\usetikzlibrary{positioning, arrows, arrows.meta, trees, shapes.geometric}
\usepackage{environ}

\def\prob {$\mathsf{PROB}$}

\newtheorem{theorem}{Theorem}

\def\probroute {$\mathsf{ProbRoute}$}
\def\probinfer {$\mathsf{ProbInfer}$}
\def\probinfer {$\mathsf{ComputeProb}$}
\def\probsample {$\mathsf{ProbSample}$}
\def\problearn {$\mathsf{ProbLearn}$}

\newtheorem{definition}{Definition}
\newtheorem{proposition}{Proposition}
\newtheorem{property}{Property}
\newtheorem*{remark*}{Remark}
\newtheorem{remark}{Remark}

\newcommand{\lo}[1]{\mathsf{Lo}(#1)}
\newcommand{\hi}[1]{\mathsf{Hi}(#1)}
\newcommand{\locounter}[1]{\mathsf{{Lo}_{\#}}(#1)}
\newcommand{\hicounter}[1]{\mathsf{{Hi}_{\#}}(#1)}
\newcommand{\child}[1]{\mathsf{Child}(#1)}
\newcommand{\parent}[1]{\mathsf{Parent}(#1)}
\newcommand{\varset}[1]{\mathsf{VarSet}(#1)}
\newcommand{\var}[1]{\mathsf{Var}(#1)}
\newcommand{\subdiagram}[1]{\mathsf{Subdiagram}(#1)}
\newcommand{\simpletrip}[1]{\mathsf{SimpleTrip}(#1)}
\newcommand{\sol}[1]{\mathsf{Sol}(#1)}
\newcommand{\rep}[2]{\mathsf{Rep_{#1}}(#2)}
\newcommand{\m}[1]{\mathcal{M}(#1)}
\newcommand{\invm}[1]{\mathcal{M'}(#1)}
\newcommand{\terminal}[1]{\mathsf{Term}(#1)}

\newcommand{\hiparam}[1]{\mathsf{\theta_{Hi}}(#1)}
\newcommand{\loparam}[1]{\mathsf{\theta_{Lo}}(#1)}

\newcounter{term}

\title{Scalable Probabilistic Routes}

\author{
Suwei Yang\inst{1,2}
\and
Victor C. Liang\inst{2}
\and
Kuldeep S. Meel\inst{1}
}

\institute{
  National University of Singapore,
  Singapore
\and
  GrabTaxi Holdings,
  Singapore\\
}

\authorrunning{Yang, Liang and Meel}

\titlerunning{Scalable Probabilistic Routes}

\begin{document}

\maketitle

\begin{abstract}
  Inference and prediction of routes have become of interest over the past decade owing to a dramatic increase in package delivery and ride-sharing services. Given the underlying combinatorial structure and the incorporation of probabilities, route prediction involves techniques from both formal methods and machine learning. One promising approach for predicting routes uses decision diagrams that are augmented with probability values. 
  However, the effectiveness of this approach depends on the size of the compiled decision diagrams.
  The scalability of the approach is limited owing to its empirical runtime and space complexity.
  In this work, our contributions are two-fold: first, we introduce a relaxed encoding that uses a linear number of variables with respect to the number of vertices in a road network graph to significantly reduce the size of resultant decision diagrams. Secondly, instead of a stepwise sampling procedure, we propose a single pass sampling-based route prediction. In our evaluations arising from a real-world road network, we demonstrate that the resulting system achieves around twice the quality of suggested routes while being an order of magnitude faster compared to state-of-the-art.
\end{abstract}

\section{Introduction} \label{sec:introduction}

The past decade has witnessed an unprecedented rise of the service economy, best highlighted by the prevalence of delivery and ride-sharing services~\cite{ride-hailing-rise,food-delivery-rise}. For operational and financial efficiency, a fundamental problem for such companies is the inference and prediction of routes taken by the drivers. 
When a driver receives a job to navigate from location A to B, the ride-sharing app needs to predict the route in order to determine: (1) the trip time, which is an important consideration for the customer, (2) the fare, important consideration for both the driver and the customer, and (3) the trip experience since customers feel safe when the driver takes the route described in their app~\cite{ride-sharing-pricing,ride-sharing-eta}. 
However the reality is that drivers and customers have preferences, as such the trips taken are not always the shortest possible by distance or time~\cite{ride-sharing-route-preference}.
To this end, delivery and ride-sharing service companies have a need for techniques that can infer the distribution of routes and efficiently predict the likely route a driver takes for a given start and end location. 

Routing, a classic problem in computer science, has traditionally been approached without considering the learning of distributions~\cite{shortest-path-problem,tsp-problem}. However, Choi, Shen, and Darwiche demonstrated through a series of papers that the distribution of routes can be conceptualized as a structured probability distribution (SPD) given the underlying combinatorial structure~\cite{psdd-route,conditional-psdd-sbn,structured-baysian-networks}. Decision diagrams, which are particularly well-suited for representing SPDs, have emerged as the state-of-the-art approach for probability guided routing. The decision diagram based approach allows for learning of SPDs through the use of decision diagrams augmented with probability values, followed by a stepwise process for uncovering the route node by node.

However, scalability remains a challenge when using decision diagrams to reason about route distributions, particularly for large road networks. Existing works address this concern in various ways, such as through the use of hierarchical diagrams~\cite{psdd-route} and Structured Bayesian Networks~\cite{conditional-psdd-sbn}. Choi et al.~\cite{psdd-route} partition the structured space into smaller subspaces, with each subspace's SPD being represented by a decision diagram. Shen et al. used Structured Bayesian Networks to represent conditional dependencies between sets of random variables, with the distribution within each set of variables represented by a conditional Probabilistic Sentential Decision Diagram (PSDD)~\cite{conditional-psdd-sbn,structured-baysian-networks}. Despite these efforts, the scalability of decision diagrams for routing, in terms of space complexity, remains an open issue~\cite{psdd-structure-spaces-ChoiKRR15}.

The primary contribution of this work is to tackle the scalability challenges faced by the current state-of-the-art approaches. Our contributions are two-fold: first, we focus on minimizing the size of the compiled diagram by {\em relaxation and refinement}. In particular, instead of learning distributions over the set of all valid routes, we learn distributions over an over-approximation, perform sampling followed by refinement to output a valid route. Secondly, instead of a stepwise sampling procedure, we perform one-pass sampling by adapting existing sampling algorithm~\cite{YLM22} to perform conditional sampling. Our extensive empirical evaluation over benchmarks arising from real-world publicly available road network data demonstrates that our approach, called {\probroute}, is able to handle real-world instances that were clearly beyond the reach of the state-of-the-art. Furthermore, on instances that can be handled by prior state-of-the-art, {\probroute} achieves a median of $10\times$ runtime performance improvements.

\section{Background} \label{sec:bg}

In the remaining parts of this work we will discuss how to encode simple, more specifically simple trips, in a graph using Boolean formulas. In addition, we will also discuss decision diagrams and probabilistic reasoning with them. In this section, we introduce the preliminaries and background for the rest of the paper. 
To avoid ambiguity, we use \textit{vertices} to refer to nodes of road network graphs and \textit{nodes} to refer to nodes of decision diagrams. 

\subsection{Preliminaries} \label{sec:bg-prelim}

\paragraph{Simple Trip}
Let $G$ be an arbitrary undirected graph, a path on $G$ is a sequence of connected vertices $v_1, v_2, ... , v_m$ of $G$ where $\forall_{i=1}^{m-1} v_{i+1} \in N(v_i)$, with $N(v_i)$ referring to neighbours of $v_i$. A path $\pi$ does not contain loops if $\forall_{v_i, v_j \in \pi} v_i \neq v_j$. $\pi$ does not contain detour if $\forall_{v_i, v_j, v_k, v_l \in \pi} v_j \not \in N(v_i) \lor v_k \not \in N(v_i) \lor v_l \not \in N(v_i)$. Path $\pi$ is a simple path if it does not contain loops. A simple path $\pi$ is a simple trip if it does not contain detours. We denote the set of all simple trips in $G$ as $\simpletrip{G}$. In Figure \ref{fig:path-property}, d-e-h is a simple trip whereas d-e-f-c-b-e-h and d-e-f-i-h are not because they contain a loop and a detour respectively. We use $\mathsf{Term}(\pi)$ to refer to the terminal vertices of path $\pi$.

\begin{figure}[htb]
    \centering
    \begin{adjustbox}{width=0.22\columnwidth}
    \begin{tikzpicture}[
    roundnode/.style={circle, draw=black!60, very thick, minimum size=7mm},
    ]
    \node[roundnode](a) at (0, 0){$a$};
    \node[roundnode](b) at (1, 0){$b$};
    \node[roundnode](c) at (2, 0){$c$};
    \node[roundnode](d) at (0, -1){$d$};
    \node[roundnode](e) at (1, -1){$e$};
    \node[roundnode](f) at (2, -1){$f$};
    \node[roundnode](g) at (0, -2){$g$};
    \node[roundnode](h) at (1, -2){$h$};
    \node[roundnode](i) at (2, -2){$i$};

    \draw [-] (a) -- (b);
    \draw [-] (a) -- (d);
    \draw [-] (b) -- (c);
    \draw [-] (b) -- (e);
    \draw [-] (c) -- (f);
    \draw [-] (d) -- (e);
    \draw [-] (d) -- (g);
    \draw [-] (e) -- (h);
    \draw [-] (e) -- (f);
	\draw [-] (f) -- (i);
    \draw [-] (g) -- (h);
    \draw [-] (h) -- (i);

    \end{tikzpicture}
    \end{adjustbox}
    \caption{A 3 $\times$ 3 grid graph}
    \label{fig:path-property}
\end{figure}
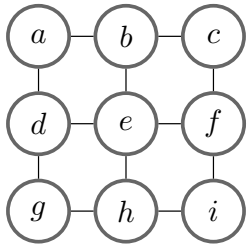 

\paragraph{Probabilistic Routing Problem}

In this paper, we tackle the probabilistic routing problem which we define as the following. Given a graph $G$ of an underlying road network, training and  testing data $D_{train}, D_{test}$, start and end vertices $s, t$, sample path $\pi$ from $s$ to $t$ such that $\varepsilon$-match rate with ground truth path $\pi' \in D_{test}$ is maximized. We define $\varepsilon$-match rate between $\pi$ and $\pi'$ as $|U_{close(\pi)}| \div |U|$ where $U$ is the set of vertices of $\pi'$ and $U_{close(\pi)}$ is the set of vertices of $\pi'$ that are within $\varepsilon$ euclidean distance away from the nearest vertex in $\pi$. More details on $\varepsilon$ will be discussed in Section~\ref{sec:experiments}.

\paragraph{Boolean Formula}

A Boolean variable can take values \textit{true} or \textit{false}. A literal $x$ is a Boolean variable or its negation. Let $F$ be a Boolean formula. $F$ is in conjunctive normal form (CNF) if $F$ is a conjunction of clauses, where each clause is a disjunction of literals. $F$ is satisfiable if there exists an assignment $\tau$ of the set of variables $X$ of $F$ such that $F$ evaluates to \textit{true}. We refer to $\tau$ as a satisfying assignment of $F$ and denote the set of all $\tau$ as $\mathsf{Sol}(F)$. In the remaining parts of this paper, all formulas and variables are Boolean unless stated otherwise.

\paragraph{Decision Diagrams}

Decision diagrams are directed acyclic graph (DAG) representations of logical formulas under the knowledge compilation paradigm. Decision diagrams are designed to enable the tractable handling of certain types of queries, that is the queries can be answered in polynomial time with respect to the number of nodes in the diagram~\cite{kc-map-darwiche}. We use diagram size to refer to the number of nodes in a decision diagram. In this work we use the OBDD[$\land$]~\cite{OBDDand-lai} variant of OBDD, more specifically Probabilistic OBDD[$\land$] ({\prob})~\cite{YLM22}, for which there are existing efficient sampling algorithm. We will discuss the {\prob} decision diagram in later sections.

\subsection{Related Works} \label{sec:related-works}

The continuous pursuit of compact representations by the research community has resulted in several decision diagram forms over the years. Some of the decision diagram forms include AOMDD for multi-valued variables, OBDD and SDD for binary variables~\cite{obdd-bryant,sdd,aomdd}. Both OBDD and SDD are canonical representations of Boolean formulas given variable ordering for OBDD and \textit{Vtree} for SDD respectively. OBDD~\cite{obdd-bryant} is comprised of internal nodes that correspond to variables and leaf nodes that correspond to $\top$ or $\bot$. Each internal node of OBDD have exactly two child and represents the Shannon decomposition~\cite{Boole54} on the variable represented by that internal node. SDDs are comprised of elements and nodes \cite{sdd}. Elements represent conjunction of a \textit{prime} and a \textit{sub}, each of which can either be a terminal SDD or a decomposition SDD. A decomposition SDD is represented by a node with child elements representing the decomposition. A terminal SDD can be a literal, $\top$ or $\bot$. The decompositions of SDDs follow that of the respective \textit{Vtree}, which is a full binary decision tree of Boolean variables in the formula. In this work, we use the OBDD[$\land$]~\cite{OBDDand-lai} variant of OBDD, which is shown to be theoretically incomparable but empirically more succinct than SDDs~\cite{OBDDand-lai}.

A related development formulates probabilistic circuits \cite{probabilistic-circuits-book}, based on sum-product networks \cite{spn} and closely related to decision diagrams, as a special class of neural networks known as Einsum networks \cite{einsum-networks}. In the Einsum network structure, leaf nodes represent different gaussian distributions. By learning from data, Einsum networks are able to represent SPDs as weighted sums and mixtures of gaussian distributions. Einsum networks address scalability by utilizing tensor operations implemented in existing deep learning frameworks such as PyTorch \cite{pytorch}. Our work differs from the Einsum network structure, we require the determinism property for decision diagrams whereas the underlying structure for Einsum network lacks this property. We will introduce the determinism property in the following section.

Various Boolean encodings have been proposed for representing paths within a graph, including Absolute, Compact, and Relative encodings~\cite{ham-path-encoding}. These encodings capture both the vertices comprising the path and the ordering information of said vertices. However, these encodings necessitate the use of polynomial number of Boolean variables, specifically $|V|^2$, $|V| log_{2} |V|$, and $2|V|^2$ variables for Absolute, Compact, and Relative encoding respectively. While these encodings accurately represent the space of paths within a graph, they are not efficient and lead to high space and time complexity for downstream routing tasks. 

Choi, Shen, and Darwiche demonstrated over a series of papers that the distribution of routes can be conceptualized as a structured probability distribution (SPD) given the underlying combinatorial structure~\cite{psdd-route,conditional-psdd-sbn,structured-baysian-networks}. This approach, referred to as the `CSD' approach in the rest of this paper, builds on top of existing approaches that represents paths using zero-surpressed decision diagrams~\cite{K05,Graphillion,KIIM17}. The CSD approach utilizes sentential decision diagrams to represent the SPD of paths and employs a stepwise methodology for handling path queries. Specifically, at each step, the next vertex to visit is determined to be the one with the highest probability, given the vertices already visited and the start and destination vertices. While the CSD approach has been influential in its incorporation of probabilistic elements in addressing the routing problem, it is not without limitations. In particular, there are two main limitations (1) there are no guarantees of completion, meaning that even if a path exists between a given start and destination vertex, it may not be returned using the CSD approach~\cite{psdd-route}. (2) the stepwise routing process necessitates the repeated computation of conditional probabilities, resulting in runtime inefficiencies. 

In summary, the limitations of prior works are Boolean encodings that require a high number of variables, lack of routing task completion guarantees, and numerous conditional probability computations.

\subsection{{\prob}: Probabilistic OBDD[$\land$]} \label{sec:notation-prelim}

In this subsection, we will introduce the {\prob} (Probabilistic OBDD[$\land$]) decision diagram structure and properties. We adopt the same notations as prior work~\cite{YLM22} for consistency.

\paragraph{Notations}

We use \textit{nodes} to refer to nodes in {\prob} $\psi$ and \textit{vertices} to refer to nodes in graph $G(V,E)$ to avoid ambiguity. $\child{n}$ refers to the children of node $n$. $\hi{n}$ refers to the hi child of decision node $n$ and $\lo{n}$ refers to the lo child of $n$. $\hiparam{n}$ and $\loparam{n}$ refer to the parameters associated with the edge connecting decision node $n$ with $\hi{n}$ and $\lo{n}$ respectively in a {\prob}. 
$\var{n}$ refers to the associated variable of decision node $n$. $\varset{n}$ refers to the set of variables of $F$ represented by a {\prob} with $n$ as the root node. $\subdiagram{n}$ refers to the {\prob} starting at node $n$. $\parent{n}$ refers to the immediate parent nodes of $n$ in {\prob}.

\paragraph{{\prob} Structure}

Let $\psi$ be a {\prob} which represents a Boolean formula $F$. $\psi$ is a DAG comprising of four types of nodes - conjunction node, decision node, true node, and false node.

A conjunction node (or $\land$-node) $n_c$ splits Boolean formula $F$ into different sub-formulas, i.e. $F = F_1 \land F_2 \land ... \land F_j$. Each sub-formula is represented by a {\prob} rooted at the corresponding child node of $n_c$, such that the set of variables in each of $F_1$, $F_2$, ..., $F_j$ are mutually disjoint.

A decision node $n_d$ corresponds to a Boolean variable $x$ and has exactly two child nodes, $\hi{n_d}$ and $\lo{n_d}$. $\hi{n_d}$ represents the decision to set $x$ to \textit{true} and $\lo{n_d}$ represents otherwise. We use $\var{n_d}$ to refer to the Boolean variable $x$ that decision node $n_d$ is associated with in $F$. Each branch of $n_d$ has an associated parameter, and the branch parameters of $n_d$ sum up to 1.

The leaf nodes of {\prob} $\psi$ are true and false nodes. An assignment $\tau$ of Boolean formula $F$ is a traversal of the {\prob} from the root node to the leaf node, we denote such a traversal as $\mathsf{Rep}_{\psi} (\tau)$. At each decision node $n_d$, the traversal follows the value of variable $\var{n_d}$ in $\tau$. At each conjunction node, all child branches are traversed. A satisfying assignment of $F$ will result in a traversal from root to leaf nodes where only the true nodes are visited. If a traversal leads to a false node at any point, then the assignment is not a satisfying assignment. 

\begin{figure}[htb]
	\centering
    \begin{adjustbox}{width=0.4\columnwidth}
	\begin{tikzpicture}[
	roundnode/.style={circle, draw=black!60, very thick, minimum size=7mm},
	]
	\node[roundnode](x) at (0, 0){$x$};
	\node[roundnode](y) at (-1, -1.25){$y$};
	\node[roundnode](z) at (1, -1.25){$z$};
	\node[roundnode](t) at (1, -3){$\top$};
	\node[roundnode](f) at (-1, -3){$\bot$};

  \node(n1) at (0.6,0.25){$n1$};
  \node(n2) at (-1.6,-1.0){$n2$};
  \node(n3) at (1.6,-1.0){$n3$};
  \node(n3) at (-1.6,-2.75){$n4$};
  \node(n3) at (1.6,-2.75){$n5$};

  \node(n2hi) at (-0.8, -1.9){\footnotesize $\hiparam{n2}$};
  \node(n2lo) at (-2.0, -2.0){\footnotesize $\loparam{n2}$};
  \node(n3hi) at (0.8, -1.9){\footnotesize $\hiparam{n3}$};
  \node(n3lo) at (2.0, -2.0){\footnotesize $\loparam{n3}$};

  \draw [dashed,->] (x) -- (y) node[near start,left,rotate=0] {\footnotesize $\loparam{n1}$};
  \draw [->] (x) -- (z) node[near start,right,rotate=0] {\footnotesize $\hiparam{n1}$};
  \draw [->] (y) -- (t) ;
  \draw [dashed,->] (y) to[out=-135,in=135] (f);
  \draw [dashed,->] (z) to[out=-45,in=45] (t);
  \draw [->] (z) -- (f);

  \end{tikzpicture}
  \end{adjustbox}
  \caption{A {\prob} $\psi_1$ representing $F = (x \lor y) \land (\neg x \lor \neg z)$}
  \label{fig:Prob-non-smooth}
\end{figure}

An assignment of Boolean formula $F$ is represented by a top-down traversal of a {\prob} compiled from $F$. For example, we have a Boolean formula $F = (x \lor y) \land (\neg x \lor \neg z)$, represented by the {\prob} $\psi_1$ in Figure \ref{fig:Prob-non-smooth}. When $x$ is assigned \textit{true} and $z$ is assigned \textit{false}, $F$ will evaluate to \textit{true}. 
If we have a partial assignment $\tau$, we can perform inference conditioned on $\tau$ if we visit only the branches of decision nodes in $\psi$ that agree with $\tau$. This allows for conditional sampling, which we discuss in Section \ref{sec:sampling-approach}.

{\prob} inherits important properties of OBDD[$\land$] that are useful to our algorithms in later sections. The properties are - \textit{determinism}, \textit{decomposability}, and \textit{smoothness}. 

\begin{property}[Determinism]
    For every decision node $n_d$, the set of satisfying assignments represented by $\hi{n_d}$ and $\lo{n_d}$ are logically disjoint
\end{property}

\begin{property}[Decomposability]
	For every conjunction node $n_c$, $\varset{c_i} \cap \varset{c_j} = \emptyset, \forall c_i, c_j \in \child{n_c}, c_i \neq c_j$
\end{property}

\begin{property}[Smoothness]
    For every decision node $n_d$, $\varset{\hi{n_d}} = \varset{\lo{n_d}}$
\end{property}

In the rest of this paper, all mentions of {\prob} refer to smooth {\prob}. \textit{Smoothness} can be achieved via a smoothing algorithm introduced in prior work~\cite{YLM22}. We defer the smoothing algorithm to the appendix.

\section{Approach} \label{sec:sampling-approach}

In this section, we introduce our approach, {\probroute}, which addresses the aforementioned limitations of existing methods using (1) a novel relaxed encoding that requires a linear number of Boolean variables and (2) a single-pass sampling and refinement approach which provides completeness guarantees. The flow of {\probroute} is shown in Figure~\ref{fig:overall-flow}.

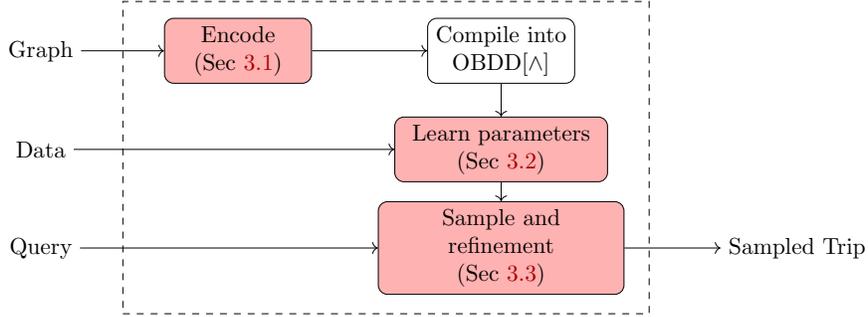
\begin{figure}[htb]
  \centering
  \begin{adjustbox}{width=0.8\columnwidth}
  \begin{tikzpicture}[
  node distance=2cm,
    diamondnode/.style={diamond, minimum width=1.5cm, minimum height=0.5cm, text centered, draw=black, fill=green!30},
    rectnode/.style={rectangle, rounded corners, minimum width=1.5cm, minimum height=0.5cm,text centered, draw=black, fill=red!30},
  rectnodenofill/.style={rectangle, rounded corners, minimum width=1.5cm, minimum height=0.5cm,text centered, draw=black},
  ]
  \node (encode) [rectnode, text width=2.0cm] at (0,0) {Encode\\(Sec \ref{subsec:encoding})};
  \node (compile) [rectnodenofill, text width=2cm] at (4, 0) {Compile into OBDD[$\land$]};
  \node (prob) [rectnode, text width=3cm] at (4, -1.5) {Learn parameters\\(Sec \ref{subsec:sampling-param})};
  \node (sample) [rectnode, text width=3.5cm] at (4, -3.0) {Sample and refinement\\(Sec \ref{subsec:sampling-algos})};

  \node (input-graph) at (-3, 0) {Graph};
  \node (input-data) at (-3, -1.5) {Data};
  \node (input-query) at (-3, -3) {Query};
  \node (output-path) at (8.5, -3) {Sampled Trip};

  \draw [->] (input-graph) -- (encode);
  \draw [->] (encode) -- (compile);
  \draw [->] (compile) -- (prob);
  \draw [->] (input-data) -- (prob);
  \draw [->] (input-query) -- (sample);
  \draw [->] (prob) -- (sample);
  \draw [->] (sample) -- (output-path);

  \draw[dashed] (-1.75, 0.75) rectangle (6.25,-4.00);

    \end{tikzpicture}
  \end{adjustbox}
    
  \caption{Flow of {\probroute}, with red rectangles indicating this work. For compilation, we use existing off-the-shelf techniques.}
  \label{fig:overall-flow}
\end{figure}

In our approach, we first use our relaxed encoding to encode $\simpletrip{G}$ of graph $G$ into a Boolean formula. Next, we compile the Boolean formula into OBDD[$\land$]. In order to learn from historical trip data, we convert the data into assignments. Subsequently, the OBDD[$\land$] is parameterized into {\prob} $\psi$ and the parameters are learned from data. Finally to sample trips from start vertex $v_s$ to destination vertex $v_t$, we create a partial assignment $\tau'$ with the variables that indicate $v_s$ and $v_t$ are terminal vertices set to \textit{true}. The {\probsample} algorithm, algorithm \ref{alg:ProbSample-traversal}, takes $\tau'$ as input and samples a set of satisfying assignments. Finally, in the refinement step, a simple trip $\pi$ is extracted from each satisfying assignment $\tau$ using depth-first search to remove disjoint loop components.

\subsection{Relaxed Encoding} \label{subsec:encoding}

In this work, we present a novel relaxed encoding that only includes vertex membership and terminal information. Our encoding only requires a linear ($2|V|$) number of Boolean variables, resulting in more succinct decision diagrams and improved runtime performance for downstream tasks. In relation to prior encodings, we observed that the ordering information of vertices can be inferred from the graph given a set of vertices and the terminal vertices, thus enabling us to exclude ordering information in our relaxed encoding. Our relaxed encoding represents an over-approximation of trips in $\simpletrip{G}$ for graph $G(V,E)$ using a linear number of Boolean variables with respect to $|V|$. We discuss the over-approximation in later parts of this section. 

Our encoding uses two types of Boolean variables, $n$-type and $s$-type variables. Each vertex $v \in V$ in graph $G(V,E)$ has a corresponding $n$-type and $s$-type variable. The $n$-type variable indicates if vertex $v$ is part of a trip and $s$-type variable indicates if $v$ is a terminal vertex of the trip. Our encoding is the conjunction of the five types of clauses over all vertices in graph $G$ as follows.

\begin{align}
  & \bigvee_{i \in V} s_{i} \label{cnf:at-least-one-s} \tag{H1}\\
  & \bigwedge_{i \in V} [n_{i} \longrightarrow \bigvee_{j \in adj(i)} n_{j}] \label{cnf:at-least-one-adj} \tag{H2}\\
  & \bigwedge_{\substack{i, j, k \in V, \\ i \not= j \not= k}} (\neg s_{i} \lor \neg s_{j} \lor \neg s_{k}) \label{cnf:at-most-two-s} \tag{H3}\\
  & \bigwedge_{i \in V} s_{i} \longrightarrow n_{i} \land \bigwedge_{j, k \in adj(i), j \not= k} (\neg n_{j} \lor \neg n_{k}) \label{cnf:terminal-at-least-one-adj} \tag{H4}\\
  & \bigwedge_{i,j \in V, j \in adj(i)} [n_i \land n_j \longrightarrow s_{i} \lor [(\bigvee_{\substack{k \in adj(i), \\ k \neq j \neq i}} n_k) \land \bigwedge_{\substack{l,m \in adj(i), \\ l,m \not= j}} (\neg n_l \lor \neg n_m )]] \label{cnf:terminal-or-exact-two-adj} \tag{H5}
\end{align}

A simple trip $\pi$ in graph $G$ has at least one terminal vertex and at most two terminal vertices, described by encoding components \ref{cnf:at-least-one-s} and \ref{cnf:at-most-two-s} respectively. At each terminal vertex $v_i$ of $\pi$, there can only be at most 1 adjacent vertex of $v_i$ that is also part of $\pi$ and this is encoded by \ref{cnf:terminal-at-least-one-adj}. For each vertex $v_i$ in $\pi$, at least one of their adjacent vertices is in $\pi$ regardless if $v_i$ is a terminal vertex or otherwise, this is captured by \ref{cnf:at-least-one-adj}. Finally, \ref{cnf:terminal-or-exact-two-adj} encodes that if a given vertex $v_i$ and one of its adjacent vertices are part of $\pi$, then either another neighbour vertex of $v_i$ is part of $\pi$ or $v_i$ is a terminal vertex.

\begin{definition} \label{def:path-to-sol}

  Let $\mathcal{M}: \simpletrip{G} \mapsto \sol{F}$ such that for a given trip $\pi \in \simpletrip{G}$, $\tau = \m{\pi}$ is the assignment whereby the $n$-type variables of all vertices $v \in \pi$ and the $s$-type variables of $v \in \terminal{\pi}$ are set to \textit{true}. All other variables are set to \textit{false} in $\tau$. 

\end{definition}

We refer to our encoding as relaxed encoding because the solution space of constraints over-approximates the space of simple trips in the graph. Notice that while all simple trips correspond to a satisfying assignment of the encoding, they are not the only satisfying assignments. Assignments corresponding to a simple trip $\pi$ with disjoint loop component $\beta$ also satisfy the constraints. The intuition is that $\beta$ introduces no additional terminal vertices, hence \ref{cnf:at-least-one-s}, \ref{cnf:at-most-two-s}, and \ref{cnf:terminal-at-least-one-adj} remain satisfied. Since the vertices in $\beta$ always have $n$-type variables of exactly two of its neighbours set to \textit{true}, \ref{cnf:terminal-or-exact-two-adj} and \ref{cnf:at-least-one-adj} remain satisfied. Thus, a simple trip with a disjoint loop component also corresponds to a satisfying assignment of our encoding.

\subsection{Learning Parameters from Data} \label{subsec:sampling-param}

\begin{algorithm}[htb]
  \begin{flushleft}
    \textbf{Input}: {\prob} $\psi$, $\tau$ - complete assignment of data instance\\
  \end{flushleft}
  \begin{algorithmic}[1] %
    \STATE $n \gets \mathsf{rootNode(\psi)}$
    \IF{$n$ is $\land$-node}
      \FOR{$c$ in $\child{n}$}
        \STATE $\mathsf{ProbLearn}(c, \tau)$ \label{line:problearn-conjunction-visit-all}
      \ENDFOR
    \ENDIF
    \IF{$n$ is decision node}
      \STATE $l \gets \mathsf{getLiteral(\tau, \var{n})}$ \label{line:problearn-decision-start}
      \IF{$l$ is positive literal}
        \STATE $\hicounter{n}$ += 1
        \STATE $\mathsf{ProbLearn(\hi{n}, \tau)}$
      \ELSE
        \STATE $\locounter{n}$ += 1
        \STATE $\mathsf{ProbLearn(\lo{n}, \tau)}$ \label{line:problearn-decision-end}
      \ENDIF
    \ENDIF
  \end{algorithmic}
  \caption{$\mathsf{ProbLearn}$ - updates counters of $\psi$ from data}
  \label{alg:prob-learn}
\end{algorithm}

We introduce algorithm~\ref{alg:prob-learn}, {\problearn}, for updating branch counters of {\prob} $\psi$ from assignments.
In order to learn branch parameters $\hiparam{n}$ and $\loparam{n}$ of decision node $n$, we require a counter for each of its branches, $\hicounter{n}$ and $\locounter{n}$ respectively. In the learning process, we have a dataset of assignments for Boolean variables in the Boolean formula represented by {\prob} $\psi$. For each assignment $\tau$ in the dataset, we perform a top-down traversal of $\psi$ following Algorithm~\ref{alg:prob-learn}. In the traversal, we visit all child branches of conjunction nodes (line~\ref{line:problearn-conjunction-visit-all}) and the child branch of decision node $n$ corresponding to the assignment of $\var{n}$ in $\tau$ (lines~\ref{line:problearn-decision-start} to~\ref{line:problearn-decision-end}), and increment the corresponding counters in the process. Subsequently, the branch parameters for node $n$ are updated according to the following formulas.

\begin{equation*}
  \hiparam{n} = \frac{\hicounter{n} + 1}{\hicounter{n} + \locounter{n} + 2}  \qquad \hfill \loparam{n} = \frac{\locounter{n} + 1}{\hicounter{n} + \locounter{n} + 2}
\end{equation*}

While we add 1 to numerator and 2 to denominator as a form of Laplace smoothing~\cite{MRS08}, other forms of smoothing to account for division by zero is possible. Notice that the learnt branch parameters of node $n$ are in fact approximations of conditional probabilities according to Proposition~\ref{prop:conditional-sample} and Remark~\ref{remark:app-learnt-conditional-prob} as follows.

\begin{proposition} \label{prop:conditional-sample}

  Let $n1$ and $n2$ be decision nodes where $n1 = {\parent{n2}}$ and $\lo{n1} = n2$, $\loparam{n2} = \frac{\locounter{n2} + 1}{\locounter{n1} + 2}$ and $\hiparam{n2} = \frac{\hicounter{n2} + 1}{\locounter{n1} + 2}$.

\end{proposition}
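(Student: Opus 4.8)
The plan is to reduce the proposition to the single counting identity
\[\hicounter{n2} + \locounter{n2} = \locounter{n1},\]
after which both claimed formulas follow immediately by substituting into the parameter-update equations $\hiparam{n2} = \frac{\hicounter{n2}+1}{\hicounter{n2}+\locounter{n2}+2}$ and $\loparam{n2} = \frac{\locounter{n2}+1}{\hicounter{n2}+\locounter{n2}+2}$ of Section~\ref{subsec:sampling-param}. So the entire argument is a bookkeeping statement about how Algorithm~\ref{alg:prob-learn} increments counters as it is run once per training instance $\tau$, each run starting from the root of $\psi$, with all counters initialised to $0$.

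First I would count, ``from below'', the recursive calls of the form $\mathsf{ProbLearn}(n2, \tau)$ made over the whole run. Since $n2$ is a decision node, each such call increments exactly one of $\hicounter{n2}$ (when $\var{n2}$ is assigned \emph{true} in $\tau$) or $\locounter{n2}$ (otherwise), and makes no other change to these two counters. Hence the total number of calls ever made on $n2$ equals $\hicounter{n2} + \locounter{n2}$.

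Next I would count the same calls ``from above''. A call on $n2$ can only be issued from within the processing of a parent of $n2$. By hypothesis $n1$ is the unique parent of $n2$ and is a decision node with $\lo{n1} = n2$; moreover we may assume $\hi{n1} \neq n2$, since otherwise $n1$ would be a redundant decision node. Inspecting Algorithm~\ref{alg:prob-learn}, when $n1$ is processed on instance $\tau$ it recurses into $\lo{n1} = n2$ exactly when $\var{n1}$ is assigned \emph{false} in $\tau$, and in precisely those cases it also increments $\locounter{n1}$. Therefore the total number of calls made on $n2$ equals the total number of increments to $\locounter{n1}$, which is $\locounter{n1}$. Equating the two counts yields the identity, and substitution finishes the proof.

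The step I expect to be the main obstacle is the ``from above'' count: it is valid only because $n2$'s parent is unique and the connecting edge is the lo-edge, so that no call on $n2$ is left unaccounted for (e.g.\ there is no conjunction-node parent that would call $\mathsf{ProbLearn}(n2,\cdot)$ unconditionally, and no hi-edge into $n2$). These are exactly the structural conditions encoded in the hypotheses $\parent{n2} = n1$ and $\lo{n1} = n2$, together with the mild reducedness assumption $\hi{n1}\neq n2$. I would also remark that the argument does not depend on whether, or how many times, $n1$ itself is reached for a given $\tau$; decomposability guarantees $n1$ is visited at most once per instance, but the counting identity holds regardless.
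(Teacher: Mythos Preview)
Your proposal is correct and follows essentially the same route as the paper: both reduce the claim to the counting identity $\hicounter{n2}+\locounter{n2}=\locounter{n1}$ (justified by the fact that every traversal reaching $n2$ does so via the Lo branch of its unique parent $n1$) and then substitute into the parameter-update formula. Your ``from below / from above'' bookkeeping is a bit more explicit than the paper's one-line justification, and your side remarks about needing $\hi{n1}\neq n2$ and about multiple parents mirror the paper's closing comments on the general case.
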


\begin{proof}

  Recall that the Lo branch parameter of $n2$ is:

  \begin{align*}
    \loparam{n2} = \frac{\locounter{n2} + 1}{\hicounter{n2} + \locounter{n2} + 2} 
  \end{align*}

  Notice that $\hicounter{n2} + \locounter{n2} = \locounter{n1}$ as all top-down traversals of $\psi$ that pass through $n2$ will have to pass through the Lo branch of $n1$.

  \begin{align*}
    \loparam{n2} = \frac{\locounter{n2} + 1}{\locounter{n1} + 2} 
  \end{align*}
  A similar argument can be made for $\hiparam{n2}$ by symmetry. In the general case if $n2$ has more than one parent, then the term $\hicounter{n2} + \locounter{n2}$ is the sum of counts of branch traversals of all parent nodes of $n2$ that leads to $n2$. Additionally, any conjunction node $c$ between $n1$ and $n2$ will not affect the proof because all children of $c$ will be traversed. For understanding, one can refer to the example in Figure \ref{fig:Prob-non-smooth} where $n1$ corresponds to the root node.

\end{proof}

\begin{remark} \label{remark:app-learnt-conditional-prob}
	Recall that $\var{n1} = x$ and $\var{n2} = y$ in {\prob} $\psi_1$ in Figure~\ref{fig:Prob-non-smooth}. Observe that $\frac{\locounter{n2}}{\locounter{n1}}$ for {\prob} $\psi_1$ in Figure~\ref{fig:Prob-non-smooth} is the conditional probability $Pr(\neg y | \neg x)$ as it represents the count of traversals that passed through Lo branch of $n2$ out of total count of traversals that passed through Lo branch of $n1$. A similar observation can be made for $\hi{n2}$.

  Notice that as the $\locounter{n2}$ and $\locounter{n1}$ becomes significantly large, that is $\locounter{n2} >> 1$ and $\locounter{n1} >> 2$:
  \begin{align*}
    \loparam{n2} = \frac{\locounter{n2} + 1}{\locounter{n1} + 2} \approx \frac{\locounter{n2}}{\locounter{n1}} = Pr(\neg y | \neg x)
  \end{align*}

  As such, the learnt branch parameters are approximate conditional probabilities. 
\end{remark}

\subsection{Sampling Trip Query Answers} \label{subsec:sampling-algos}

\begin{algorithm}[htb]
  \begin{flushleft}
    \textbf{Input}: {\prob} $\psi$, partial assignment $\tau'$\\
    \textbf{Output}: complete assignment that agrees with $\tau'$
  \end{flushleft}
  \begin{algorithmic}[1] %
    \STATE caches $\omega, \gamma$ $\longleftarrow$ initCache()
    \FOR{node $n$ in bottom-up ordering of $\psi$}
      \IF{$n$ is $\top$ node} 
        \STATE $\omega[n] \longleftarrow \emptyset$, $\gamma[n] \longleftarrow 1$
      \ELSIF{$n$ is $\bot$ node}
        \STATE $\omega[n] \longleftarrow$ $\mathsf{Invalid}$, $\gamma[n] \longleftarrow 0$
      \ELSIF{$n$ is $\land$ node}
        \STATE $\omega[n] \longleftarrow$ $\mathsf{unionChild}$(${\child{n}}, \omega$) \label{line:probsample-union-child}
        \STATE $\gamma[n] \longleftarrow \prod_{c \in \child{n}} \gamma[c]$
      \ELSE
        \IF{${\var{n}}$ in $\tau'$} \label{line:probsample-conditioning-start}
          \IF{$\omega[\tau'[{\var{n}}]]$ is $\mathsf{Invalid}$}
            \STATE $\omega[n] \longleftarrow$ $\mathsf{Invalid}$, $\gamma[n] \longleftarrow 0$
          \ELSE 
            \STATE $\omega[n] \longleftarrow$ $\mathsf{followAssign}$($\tau$)
            \IF{$\tau'[{\var{n}}]$ is $\neg \var{n}$}
              \STATE $\gamma[n] \longleftarrow \loparam{n} \times \gamma[\lo{n}]$
            \ELSE
              \STATE $\gamma[n] \longleftarrow \hiparam{n} \times \gamma[\hi{n}]$
            \ENDIF
          \ENDIF \label{line:probsample-conditioning-end}
        \ELSE
          \STATE $l \longleftarrow \loparam{n} \times \gamma[\lo{n}]$ \label{line:prob-sample-unassigned-start}
          \STATE $h \longleftarrow \hiparam{n} \times \gamma[\hi{n}]$
          \STATE $\gamma[n] \longleftarrow l + h$
          \STATE $\alpha \longleftarrow \mathsf{Binomial}(\frac{h}{l + h})$
          \IF{$\alpha$ is 1} 
            \STATE $\omega[n] \longleftarrow \omega[\hi{n}] \cup \var{n}$ 
          \ELSE
            \STATE $\omega[n] \longleftarrow \omega[\lo{n}] \cup \neg \var{n}$ \label{line:prob-sample-unassigned-end}
          \ENDIF
        \ENDIF
      \ENDIF
    \ENDFOR
    \STATE \textbf{return} $\omega$[rootnode($\psi$)]
  \end{algorithmic}
  \caption{{\probsample} - returns sampled assignment}
  \label{alg:ProbSample-traversal}
\end{algorithm}

The ability to conditionally sample trips is critical to handling trip queries for arbitrary start-end vertices, for which a trip is to be sampled conditioned on the given start and end vertices. In this work, we adapted the weighted sampling algorithm using {\prob}, which was introduced by prior work~\cite{YLM22}, to support conditional sampling and denote it as {\probsample}.

Algorithm \ref{alg:ProbSample-traversal}, {\probsample}, performs sampling of satisfying assignments from a {\prob} $\psi$ in a bottom-up manner. {\probsample} takes an input {\prob} $\psi$ and partial assignment $\tau'$ and returns a sampled complete assignment that agrees with $\tau'$. The input $\tau'$ specifies the terminal vertices for a given trip query by assigning the $s$-type variables. {\probsample} employs two caches $\omega$ and $\gamma$, for partially sampled assignment at each node and joint probabilities during the sampling process. In the process, {\probsample} performs calculations of joint probabilities at each node. In addition, {\probsample} stores the partial samples at each node in $\omega$. The partial sample for a false node would be $\mathsf{Invalid}$ as it means that an assignment is unsatisfiable. On the other hand, the partial sample for a true node is $\emptyset$ which will be incremented with variable assignments during the processing of internal nodes of $\psi$. The partially sampled assignment at every $\land$-node $c$ is the union of the samples of all its child nodes, as the child nodes have mutually disjoint variable sets due to \textit{decomposability} property. For a decision node $d$, if ${\var{d}}$ is in $\tau'$, the partial sample at $d$ will be the union of the literal in $\tau'$ and the partial sample at the corresponding child node (lines~\ref{line:probsample-conditioning-start} to \ref{line:probsample-conditioning-end}) to condition on $\tau'$. Otherwise, the partial assignment at $d$ is sampled according to the weighted joint probabilities $l$ and $h$ (lines~\ref{line:prob-sample-unassigned-start} to~\ref{line:prob-sample-unassigned-end}). Finally, the output of {\probsample} would be the sampled assignment at the root node of $\psi$. To extend {\probsample} to sample $k$ complete assignments, one has to keep $k$ partial assignments in $\omega$ at each node during the sampling process and sample $k$ independent partial assignments at each decision node. 

\begin{proposition} \label{prop:sampling-chain-branch-param}
  Let {\prob} $\psi$ represent Boolean formula $F$, {\probsample} samples $\tau \in \sol{F}$ according to the joint branch parameters, that is $\prod_{n \in \rep{\psi}{\tau}} [(1 - I_n)\loparam{n} + I_n \hiparam{n}]$ where $I_n$ is 1 if $\hi{n} \in \rep{\psi}{\tau}$ and 0 otherwise.
\end{proposition}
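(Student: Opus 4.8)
The plan is a structural induction on $\psi$ carried out in the same bottom-up order that Algorithm~\ref{alg:ProbSample-traversal} uses (take the conditioning input $\tau'$ empty for the statement as written; the general $\tau'$ is handled at the end), proving a single invariant that couples the deterministic cache $\gamma$ with the random cache $\omega$. For a node $n$ and an assignment $\sigma$ over $\varset{n}$ that satisfies $\subdiagram{n}$, set
\[
  W(n,\sigma) \;=\; \prod_{m \in \rep{\subdiagram{n}}{\sigma}} \big[(1 - I_m)\loparam{m} + I_m \hiparam{m}\big],
\]
the product ranging over the decision nodes $m$ of the trace, with $I_m = 1$ iff $\hi{m}\in\rep{\subdiagram{n}}{\sigma}$; the \emph{Determinism} property makes this trace, and hence $W(n,\sigma)$, well defined. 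The invariant I would maintain is: \textbf{(I1)} $\gamma[n] = \sum_{\sigma}W(n,\sigma)$, summed over all satisfying $\sigma$ of $\subdiagram{n}$; and \textbf{(I2)} whenever $\gamma[n]>0$, the algorithm produces $\omega[n]=\sigma$ with probability $W(n,\sigma)/\gamma[n]$ and never returns $\mathsf{Invalid}$. Applying this at the root node, whose sub-diagram is $\psi$ and whose trace is $\rep{\psi}{\tau}$, gives $\Pr[\text{output}=\tau] = \big(\prod_{n\in\rep{\psi}{\tau}}[(1-I_n)\loparam{n}+I_n\hiparam{n}]\big)\big/\gamma[\text{root}]$; that is, {\probsample} outputs $\tau$ with probability proportional to the joint branch parameters, the normalizing constant being $\gamma[\text{root}] = \sum_{\sigma\in\sol{F}}\prod_{n\in\rep{\psi}{\sigma}}[\dots]$, which equals $1$ exactly when no decision branch of $\psi$ leads to a $\bot$ node.

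The base cases are immediate: a $\top$ node has $\gamma=1$ and $\omega=\emptyset$ deterministically, matching the empty product $W=1$; a $\bot$ node has $\gamma=0$ and $\omega=\mathsf{Invalid}$, so (I2) is vacuous. For a conjunction node $n$ I would invoke \emph{Decomposability}: the sets $\varset{c}$ for $c\in\child{n}$ are pairwise disjoint, so each satisfying $\sigma$ of $\subdiagram{n}$ factors uniquely as $\sigma=\bigcup_c\sigma_c$ with $\sigma_c$ satisfying $\subdiagram{c}$, the trace decomposes as $\{n\}\cup\bigcup_c\rep{\subdiagram{c}}{\sigma_c}$ with its decision nodes partitioned among the $\rep{\subdiagram{c}}{\sigma_c}$ (any decision node shared by two children would force a shared variable), and therefore $W(n,\sigma)=\prod_c W(c,\sigma_c)$. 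Since the algorithm sets $\gamma[n]=\prod_c\gamma[c]$ and forms $\omega[n]$ as the union of the independently sampled $\omega[c]$, both (I1) and (I2) follow by multiplying the children's hypotheses.

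For a decision node $n$ with $\var{n}=x$ not fixed by $\tau'$, I would use \emph{Smoothness}: $\varset{\hi{n}}=\varset{\lo{n}}$, so $\varset{n}=\{x\}\cup\varset{\hi{n}}$ and every satisfying assignment of $\subdiagram{n}$ is either $\{x\}\cup\sigma'$ with $\sigma'\in\sol{\subdiagram{\hi{n}}}$, giving $I_n=1$ and $W(n,\{x\}\cup\sigma')=\hiparam{n}\,W(\hi{n},\sigma')$, or the symmetric object through $\lo{n}$, and these two families are disjoint since they disagree on $x$. Summing yields $\gamma[n]=\hiparam{n}\gamma[\hi{n}]+\loparam{n}\gamma[\lo{n}]=h+l$, precisely the quantity the algorithm stores, which is (I1). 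For (I2) the algorithm chooses the Hi branch with probability $h/(h+l)=\hiparam{n}\gamma[\hi{n}]/\gamma[n]$ and then sets $\omega[n]=\omega[\hi{n}]\cup\{x\}$; combined with $\Pr[\omega[\hi{n}]=\sigma']=W(\hi{n},\sigma')/\gamma[\hi{n}]$ this gives $\Pr[\omega[n]=\{x\}\cup\sigma']=\hiparam{n}W(\hi{n},\sigma')/\gamma[n]=W(n,\{x\}\cup\sigma')/\gamma[n]$, and symmetrically for the Lo branch. A child with $\gamma=0$ is harmless: the matching assignment family is empty and the Binomial draw never selects the dead branch. Finally, when $\var{n}\in\tau'$ the algorithm deterministically follows the literal of $\tau'$ and multiplies in the corresponding branch parameter, so the same bookkeeping establishes the conditional version of (I1)--(I2) with $\sol{\cdot}$ replaced throughout by the satisfying assignments that agree with $\tau'$.

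The main obstacle I anticipate is keeping (I1) and (I2) properly intertwined through the recursion: because (I2) at a node is stated in terms of the children's $\gamma$ values, the deterministic and probabilistic parts must be proved in one simultaneous induction, and each structural property carries the argument at exactly one place --- \emph{Decomposability} at conjunction nodes (used both to factor $W$ and to license treating the child subsamples as independent draws over disjoint variables), \emph{Smoothness} to line $\sigma'$ up against all of $\sol{\subdiagram{\hi{n}}}$ and $\sol{\subdiagram{\lo{n}}}$, and \emph{Determinism} to make the traces and $W$ unambiguous. A secondary point worth a sentence in the write-up is the constant $\gamma[\text{root}]$: one should either read ``according to the joint branch parameters'' as ``proportional to'', or note that for the OBDD[$\land$]s compiled here every decision branch reaches a $\top$ node, so $\gamma[\text{root}]=1$ and the stated equality holds verbatim.
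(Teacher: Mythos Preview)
Your proof is correct and follows the same underlying idea as the paper --- analyzing the bottom-up pass and arguing that the product of branch parameters emerges --- but your execution is considerably more structured than the paper's own argument. The paper's proof proceeds informally: it first assumes $\psi$ has only decision nodes as internal nodes, unrolls the recursion on a two-level chain $d,\lo{d}$, observes that the joint probability telescopes into a product of branch parameters, and then asserts ``in the general case'' the result holds, dismissing $\land$-nodes in a single sentence at the end. Your proposal replaces this sketch with a genuine structural induction carrying the coupled invariants (I1)--(I2) on $\gamma$ and $\omega$ simultaneously, and in doing so you pinpoint exactly where each diagram property is consumed (Determinism for well-definedness of $W$, Decomposability for the factorization at $\land$-nodes and the independence of the child samples, Smoothness to align $\varset{\hi{n}}$ with $\varset{\lo{n}}$), which the paper does not do. You also surface a point the paper glosses over: the output probability is $W(\text{root},\tau)/\gamma[\text{root}]$, i.e.\ \emph{proportional} to the stated product, with equality only when no live decision branch reaches $\bot$; the paper's proof simply writes ``$\tau$ is sampled with probability $P=\prod\ldots$'' without discussing the normalizer. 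What the paper's version buys is brevity; what yours buys is a proof that actually closes, handles $\land$-nodes on equal footing with decision nodes, and extends cleanly to nonempty $\tau'$.
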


\begin{proof}
  Let $\psi$ be a {\prob} that only consists of decision nodes as internal nodes. At each decision node $d$ in the bottom-up sampling pass, assignment of $\var{d}$ is sampled proportional to $\loparam{d} \times \gamma[\lo{d}]$ and $\hiparam{d} \times \gamma[\hi{d}]$ to be \textit{false} and \textit{true} respectively. Without loss of generality, we focus on the term $\loparam{d} \times \gamma[\lo{d}]$, a similar argument would follow for the other branch by symmetry. 

  Let $d2$ denote $\lo{d}$. Notice that $\gamma[d2]$ is $\loparam{d2} \times \gamma[\lo{d2}] + \hiparam{d2} \times \gamma[\hi{d2}]$. Expanding the equation, the probability of sampling $\neg \var{d}$ is $\loparam{d} \times \loparam{d2} \times \gamma[\lo{d2}] + \loparam{d} \times \hiparam{d2} \times \gamma[\hi{d2}]$. If we expand $\gamma[\lo{d}]$ recursively, we are considering all possible satisfying assignments of $\varset{\lo{d}}$, more specifically we would be taking the sum of the product of corresponding branch parameters of each possible satisfying assignment of $\varset{\lo{d}}$.

  Observe that $\var{d}$ is sampled to be assigned \textit{false} with probability $\loparam{d} \times \loparam{d2} \times \gamma[\lo{d2}] + \loparam{d} \times \hiparam{d2} \times \gamma[\hi{d2}]$. Similarly, $\var{d2}$ is sampled to be assigned \textit{false} with probability $\loparam{d2} \times \gamma[\lo{d2}]$. Notice that if we view the bottom-up process in reverse, the probability of sampling $\neg \var{d}$ and $\neg \var{d2}$ is $\loparam{d} \times \loparam{d2} \times \gamma[\lo{d2}]$. In the general case, it then follows that a satisfying assignment would reach the \textit{true} node which has $\gamma$ value set to 1. It then follows that for each $\tau \in \mathsf{Sol}(F)$, $\tau$ is sampled with probability $P = \prod_{n \in \mathsf{Rep}_{\psi}(\tau)} [(1 - I_n) \loparam{n} + I_n \hiparam{n}]$. Notice that $\land$-nodes have no impact on the sampling probability as no additional terms are introduced in the product of branch parameters.
\end{proof}

\begin{remark}
  Recall in Remark~\ref{remark:app-learnt-conditional-prob} that $\hiparam{n}$ and $\loparam{n}$ are approximately conditional probabilities. By Proposition~\ref{prop:sampling-chain-branch-param}, assignment $\tau \in \sol{F}$ is sampled with probability proportional to $\prod_{n \in \mathsf{Rep}_{\psi}(\tau)} [(1 - I_n) \loparam{n} + I_n \hiparam{n}]$. Notice that if we rewrite the product of branch parameters as the product of approximate conditional probability, it is approximately the joint probability of sampling $\tau$.
\end{remark} 

\paragraph{Refinement}
In the refinement step, we extract a trip from sampled assignment $\tau$ by removing spurious disjoint loop components using depth-first search. 

\begin{definition} \label{def:sol-to-path}
  Let $\mathcal{M'}: \sol{F} \mapsto \simpletrip{G}$ be the mapping function of the refinement process, for a given graph $G$ and its relaxed encoding $F$. 
  For an assignment $\tau \in \sol{F}$, let $V_{\tau}$ be the set of vertices in $G$ that have their $n$-type variables set to \textit{true} in $\tau$. A depth-first search is performed from the starting vertex on $V_{\tau}$, removing disjoint components. The resultant simple path is $\pi = \invm{G}$.
\end{definition}

Although $\mathcal{M'}(\cdot)$ is a many-to-one (i.e. surjective) mapping function, it is not a concern in practice as trips with disjoint loop components are unlikely to occur in real-world or synthetic trip data from which probabilities can be learned. 

\begin{theorem}\label{prop:sampling-completeness}
  Given $v_s, v_t \in G$, let $\pi_{s,t} \in \simpletrip{G}$ be a trip between $v_s$ and $v_t$. Let $R_{\pi_{s,t}} = \{\tau \mid (\tau \in \mathsf{Sol}(F)) \wedge  (\mathcal{M'}(\tau) = \pi_{s,t}) \}$. Then,

  \begin{equation*}
    \Pr[\pi_{s,t} \text{ is sampled}] \propto \sum\limits_{\tau \in R_{\pi_{s,t}}} \prod_{n \in \mathsf{Rep}_{\psi}(\tau)} [(1 - I_n)\loparam{n} + I_n \hiparam{n}]
  \end{equation*}
\end{theorem}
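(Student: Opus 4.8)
The plan is to decompose the event "$\pi_{s,t}$ is sampled" according to which satisfying assignment $\tau$ produced it. By the definition of the overall {\probroute} pipeline, a trip is output in two stages: {\probsample} returns some complete assignment $\tau$ that agrees with the partial assignment $\tau'$ fixing the $s$-type variables of $v_s$ and $v_t$ to \textit{true}, and then the refinement map $\mathcal{M'}(\cdot)$ turns $\tau$ into a trip. Since $\mathcal{M'}$ is a (deterministic) function, the trip $\pi_{s,t}$ is output precisely when {\probsample} returns some $\tau$ with $\mathcal{M'}(\tau) = \pi_{s,t}$, i.e. some $\tau \in R_{\pi_{s,t}}$. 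These events are mutually exclusive over distinct $\tau$ (each run of {\probsample} returns exactly one assignment), so the probability of outputting $\pi_{s,t}$ is the sum over $\tau \in R_{\pi_{s,t}}$ of the probability that {\probsample} returns that particular $\tau$.

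The first key step is to argue that $R_{\pi_{s,t}}$ is exactly the set of $\tau \in \sol{F}$ that agree with $\tau'$ and refine to $\pi_{s,t}$ — in particular, every such $\tau$ is reachable by {\probsample} under the conditioning on $\tau'$, so no term is spuriously included or excluded. This uses that $\pi_{s,t}$ is a genuine trip between $v_s$ and $v_t$, so $\mathcal{M}(\pi_{s,t})$ is one witness in $R_{\pi_{s,t}}$ (nonemptiness), and that any $\tau$ consistent with $\tau'$ has its $s$-variables for $v_s,v_t$ set correctly so the conditioning branch of {\probsample} (lines~\ref{line:probsample-conditioning-start}--\ref{line:probsample-conditioning-end}) does not zero it out. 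The second key step is to invoke Proposition~\ref{prop:sampling-chain-branch-param}: for a fixed $\tau$, {\probsample} returns $\tau$ with probability proportional to $\prod_{n \in \rep{\psi}{\tau}} [(1-I_n)\loparam{n} + I_n\hiparam{n}]$. Strictly, Proposition~\ref{prop:sampling-chain-branch-param} is stated for the unconditioned sampler, so I would note that conditioning on $\tau'$ only restricts the traversal at the decision nodes for variables in $\tau'$ and renormalizes by the total mass of assignments agreeing with $\tau'$; the relative weights among the surviving assignments — which is all that the "$\propto$" in the statement asks about — are unchanged, and the $\land$-nodes contribute no factors. Summing these per-$\tau$ weights over $\tau \in R_{\pi_{s,t}}$ yields the claimed expression.

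The main obstacle I anticipate is handling the conditioning cleanly. Proposition~\ref{prop:sampling-chain-branch-param} as written describes the distribution of {\probsample} with no partial assignment, whereas Theorem~\ref{prop:sampling-completeness} concerns the conditioned run. I would need to check that for a decision node $n$ with $\var{n} \in \tau'$, the $\gamma$-value computed in lines~\ref{line:probsample-conditioning-start}--\ref{line:probsample-conditioning-end} is exactly the single surviving branch term (rather than $l+h$), so that the telescoping recursion in the proof of Proposition~\ref{prop:sampling-chain-branch-param} still goes through with the product taken only over decision nodes not fixed by $\tau'$, and that the normalizing constant is the sum of these products over all $\tau$ agreeing with $\tau'$. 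Because the theorem only claims proportionality, this normalizing constant can be absorbed into the "$\propto$", so the remaining work is essentially bookkeeping: reindex the product in Proposition~\ref{prop:sampling-chain-branch-param} over $\rep{\psi}{\tau}$ (noting that for $n$ with $\var{n}\in\tau'$ the factor $(1-I_n)\loparam{n}+I_n\hiparam{n}$ is the forced branch parameter and hence a common constant across all $\tau \in R_{\pi_{s,t}}$), and then sum over $R_{\pi_{s,t}}$.
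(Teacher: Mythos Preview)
Your proposal is correct and follows essentially the same route as the paper: decompose the event ``$\pi_{s,t}$ is sampled'' over the disjoint events $\{\text{\probsample\ returns }\tau\}$ for $\tau \in R_{\pi_{s,t}}$, appeal to Definitions~\ref{def:path-to-sol} and~\ref{def:sol-to-path} for nonemptiness of $R_{\pi_{s,t}}$, and invoke Proposition~\ref{prop:sampling-chain-branch-param} for the per-$\tau$ weight. If anything, you are more careful than the paper, which simply cites Proposition~\ref{prop:sampling-chain-branch-param} without discussing how conditioning on $\tau'$ affects the sampling weights; your observation that the forced-branch factors are common to every $\tau \in R_{\pi_{s,t}}$ and hence absorbed into the proportionality constant is exactly the missing bookkeeping.
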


\begin{proof}
  From Definition~\ref{def:path-to-sol} and~\ref{def:sol-to-path}, one can say that given a graph $G$ and its relaxed encoding $F$, $\forall \pi \in \simpletrip{G}, \exists \tau \in \sol{F}$ such that $\invm{\tau} = \pi$.

  Notice that sampling $\pi_{s,t}$ amounts to sampling $\tau \in R_{\pi_{s,t}}$. As such, the probability of sampling $\pi_{s,t}$ would be the sum over probability of sampling each member of $R_{\pi_{s,t}}$. Recall that the probability of sampling a single assignment $\tau$ is proportional to $\prod_{n \in \mathsf{Rep}_{\psi}(\tau)} [(1 - I_n)\loparam{n} + I_n \hiparam{n}]$ by Proposition~\ref{prop:sampling-chain-branch-param}. As such the probability $Pr[\pi_{s,t} \text{ is sampled}]$ is proportional to $\sum_{\tau \in R_{\pi_{s,t}}} \prod_{n \in \mathsf{Rep}_{\psi}(\tau)} [(1 - I_n)\loparam{n} + I_n \hiparam{n}]$.
\end{proof}

\begin{remark}
  It is worth noting that $Pr[\pi_{s,t} \text{ is sampled}] > 0$, as all branch parameters are greater than 0 by definition. Recall that branch parameters are computed with a '+1' in numerator and '+2' in denominator, and given that branch counters are 0 or larger, branch parameters are strictly positive.
\end{remark}

\section{Experiments} \label{sec:experiments}

In order to evaluate the efficacy of {\probroute}, we built a prototype in Python 3.8 with NumPy~\cite{numpy}, toposort~\cite{toposort}, OSMnx\cite{B17}, and NetworkX~\cite{HSS08} packages. We employ KCBox tool\footnote{https://github.com/meelgroup/KCBox} for OBDD[$\land$] compilation~\cite{OBDDand-lai}. The experiments were conducted on a cluster of machines with Intel Xeon Platinum 8272CL processors and 64GB of memory. In the experiments, we evaluated {\probroute} against an adaptation of the state-of-the-art probabilistic routing approach~\cite{psdd-route} and an off-the-shelf non-probabilistic routing library, Pyroutelib3~\cite{pyroutelib}, in terms of quality of trip suggestions and runtime performance. In particular, we adapted the state-of-the-art approach by Choi et al~\cite{psdd-route} to sample for trips instead of computing the most probable trip and refer to the adapted approach as `CSD' in the rest of the section. In addition, we compared our relaxed encoding to existing path encodings across various graphs, specifically to absolute encoding and compact encoding~\cite{ham-path-encoding}.

Through the experiments, we investigate the following:
\begin{description}
  \item [R1] Can {\probroute} effectively learn from data and sample quality trips?
  \item [R2] How efficient is our relaxed encoding technique?
  \item [R3] What is the runtime performance of the {\probroute}?
\end{description}

\paragraph{Data Generation}

In this study, we use the real-world road network of Singapore obtained from OpenStreetMap~\cite{OpenStreetMap} using OSMnx. The road network graph $G_r$ consisted of 23522 vertices and 45146 edges along with their lengths. In addition, we also use an abstracted graph\footnote{We use the geohash system (geohash level 5) of partitioning the road network graph. For more information on the format http://geohash.org/site/tips.html\#format} of $G_r$ which we denote as $G_a$ for the remaining of this section. A vertex in $G_a$ corresponds to a unique subgraph of $G_r$.

Synthetic trips were generated by deviating from shortest path given start and end vertices. A random pair of start and end vertices were selected in $G_r$ and the shortest path $\pi$ was found. Next, the corresponding intermediate regions of $\pi$ in $G_a$ are blocked in $G_r$, and a new shortest path $\pi'$ was found and deemed to be the synthetic trip generated. We generated 11000 synthetic trips, 10000 for training and 1000 for evaluation. While we used $G_a$ to keep the trip sampling time reasonable, it is possible to use more fine-grained regions for offline applications.

\subsection{R1: {\probroute}'s Ability to Learn Probabilities} \label{subsec:synthetic-rq}

To understand {\probroute}'s ability to learn probabilities from data, we evaluate its ability to produce trips that closely resembles the ground truth. Both {\probroute} and CSD, which are sampling-based approaches, were evaluated by sampling 20 trips and taking the median match rate for each instance. Recall that the $\varepsilon$-match rate is defined as the proportion of vertices in the ground truth trip that were within $\varepsilon$ meters of euclidean distance from the closest vertex in the proposed trip. In the evaluation, we set the $\varepsilon$ tolerance to be the median edge length of $G_r$, which is 64.359 meters, to account for parallel trips. To further emphasize the advantages of probabilistic approaches, we included an off-the-shelf routing library, Pyroutelib3~\cite{pyroutelib}, in the comparison.

In order to conduct a fair comparison, we have adapted the CSD approach to utilize {\prob} derived from our relaxed encoding. 
Our evaluation utilizes this adapted approach to sample a trip on $G_a$ in a stepwise manner, where the probability of the next step is conditioned on the previous step and destination. The conditional probabilities are computed in a similar manner to the computations of joint probabilities, which are the $\gamma$ cache values, in the {\probsample}. The predicted trip on the road network $G_r$ is determined by the shortest trip on the subgraph formed by the sequence of sampled regions. In contrast, {\probroute} samples a trip on $G_a$ in a single pass, and subsequently retrieves the shortest trip on the subgraph of the sampled regions as the predicted trip on $G_r$. It is worth noting that for sampling-based approaches, there may be instances where a trip cannot be found on $G_r$ due to factors such as a region in $G_a$ containing disconnected components. We incorporated a rejection sampling process with a maximum of 400 attempts and 5 minutes to account for such scenarios.

\begin{table}[htb]
  \centering
  \begin{small}
  \begin{tabular}{l|rrr|rrr}
  \toprule
  Stats & \multicolumn{3}{|c}{Exact Match} &\multicolumn{3}{|c}{$\varepsilon$-Match}\\
    &  Pyroutelib & CSD & {\probroute} & Pyroutelib & CSD & {\probroute}\\
  \midrule
  
  25\%  & 0.045 & 0.049 & \textbf{0.082} & 0.061 & 0.066 & \textbf{0.102} \\
  50\%  & 0.088 & 0.160 & \textbf{0.310} & 0.107 & 0.172 & \textbf{0.316} \\
  75\%  & 0.185 & 0.660 & \textbf{1.000} & 0.208 & 0.663 & \textbf{1.000} \\
  Mean  & 0.151 & 0.359 & \textbf{0.445} & 0.171 & 0.372 & \textbf{0.456} \\
  \bottomrule
  \end{tabular}
  \end{small}
  \caption{Match rate statistics for completed benchmark instances by respective methods. The percentages under `Stats' column refer to the corresponding percentiles. `Exact Match' refers to match rate when $\varepsilon=0$, and `$\varepsilon$-Match' refers to match rate when $\varepsilon$ is set to median edge length of $G_r$.}
  \label{tab:synthetic-stats}
\end{table}

Table~\ref{tab:synthetic-stats} shows the match rate statistics of the respective methods. Under $\varepsilon$-Match setting, where $\varepsilon$ is set as the median edge length of $G_r$ to account for parallel trips, {\probroute} has the highest match rate among the three approaches. In addition, {\probroute} produced perfect matches for more than 25\% of instances. {\probroute} has 0.316 $\varepsilon$-match rate on median, significantly more than 0.172 for CSD and 0.107 for Pyroutelib. The trend is similar for exact matches, where $\varepsilon$ is set to 0 as shown under the `Exact Match' columns in Table~\ref{tab:synthetic-stats}. In the exact match setting, {\probroute} achieved a median of 0.310 match rate, almost double that of CSD's 0.160 median match rate. The evaluation results also demonstrate the usefulness of probabilistic approaches such as {\probroute}, especially in scenarios where experienced drivers navigate according to their own heuristics which may be independent of the shortest trip. In particular, {\probroute} would be able to learn and suggest trips that align with the unknown heuristics of driver preferences given start and destination locations. Thus, the results provide an affirmative answer to \textbf{R1}.

\subsection{R2: Efficiency of Relaxed Encoding} \label{subsec:compilation-size}

\begin{table}[htb]
  \centering
  \begin{small}
  \begin{tabular}{l|rrrr|r}
  \toprule
  Encoding & \multicolumn{4}{c|}{Grid} & SGP\\
    &  $2$  & $3$    &  $4$   & $5$  & \\
  \midrule
  Absolute        & 99   & 1500   & 31768 & 1824769   & TO \\
  Compact         & 771   & TO    & TO    & TO        & TO \\
  Relaxed(Ours)  & \textbf{31}    & \textbf{146}   & \textbf{2368}   & \textbf{20030}    & \textbf{38318}\\
  \bottomrule
  \end{tabular}
  \end{small}
  \caption{Comparison of OBDD[$\land$] size for different graphs, with 3600s timeout. Grid 2 refers to a 2x2 grid graph. SGP refers to abstract graph ($G_a$) of Singapore road network.}
  \label{tab:encoding-diagram-size}
\end{table}

We compared our relaxed encoding to existing path encodings across various graphs, specifically to absolute encoding and compact encoding \cite{ham-path-encoding}. In the experiment, we had to adapt compact encoding to CNF form with Tseitin transformation~\cite{T83}, as CNF is the standard input for compilation tools. We compiled the CNFs of the encodings into OBDD[$\land$] form with 3600s compilation timeout and compared the size of resultant diagrams. The results are shown in Table~\ref{tab:encoding-diagram-size}, with rows corresponding to the different encodings used and columns corresponding to different graphs being encoded. Entries with {\em TO} indicate that the compilation has timed out. Table~\ref{tab:encoding-diagram-size} shows that our relaxed encoding consistently results in smaller decision diagrams, up to 91$\times$ smaller. It is also worth noting that relaxed encoding is the only encoding that leads to compilation times under 3600s for the abstracted \textit{Singapore} graph. The results strongly support our claims about the significant improvements that our relaxed encoding brings.

\subsection{R3: {\probroute}'s Runtime Performance} \label{subsec:runtime-comparison}

\begin{table}[htb]
  \centering
  \begin{small}
  \begin{tabular}{l|rr}
  \toprule
  Stats       & $\frac{\text{CSD}}{\text{Pyroutelib}} \times 10^3$   & $\frac{\text{{\probroute}}}{\text{Pyroutelib}} \times 10^3$ \\
  \midrule

  25\%        & 6.33    & \textbf{1.40}  \\
  50\%        & 21.64   & \textbf{2.00}  \\
  75\%        & 47.90   & \textbf{3.03}  \\
  Mean        & 36.16   & \textbf{2.62}  \\

  \bottomrule
  \end{tabular}
  \end{small}
  \caption{Relative runtime statistics (lower is better) for completed instances by CSD and {\probroute}. Under column `$\frac{\text{CSD}}{\text{Pyroutelib}}$' and row `50\%', CSD approach takes a median of $21.64 \times 10^3$ times the runtime of Pyroutelib.}
  \label{tab:runtime-comparison}
\end{table}

For wide adoption of new routing approaches, it is crucial to be able to handle the runtime demands of existing applications. As such, we measured the relative runtimes of probabilistic approaches, that is {\probroute} and CSD, with respect to existing routing system Pyroutelib and show the relative runtimes in Table~\ref{tab:runtime-comparison}. From the result, {\probroute} is more than one order of magnitude faster on median than the existing probabilistic approach CSD. The result also shows that {\probroute} is also on median more than a magnitude closer to Pyroutelib's runtime using the same {\prob} as compared to CSD approach. In addition, CSD approach timed out on 650 of the 1000 test instances, while {\probroute} did not time out. Additionally, as mentioned in~\cite{psdd-route}, CSD does not guarantee being able to produce a complete trip from start to destination. The results in Table~\ref{tab:runtime-comparison} highlight the progress made by {\probroute} in closing the gap between probabilistic routing approaches and existing routing systems.

\section{Conclusion} \label{sec:conclusion}

Whilst we have demonstrated the efficiency of our approach, there are possible extensions to make our approach more appealing for wide adoption. In terms of runtime performance, our approach is three orders of magnitude slower than existing probability agnostic routing systems. As such, there is still room for runtime improvements for our approach to be functional replacements of existing routing systems. Additionally, our relaxed encoding only handles undirected graphs at the moment and it would be of practical interest to extend the encoding to directed graphs to handle one-way streets. Furthermore, it would also be of interest to incorporate ideas to improve runtime performance from existing hierarchical path finding algorithms such as contractual hierarchies, multi-level dijkstra and other related works~\cite{MLD,Contraction-Hierarchies,ROAD}.

In summary, we focused on addressing the scalability barrier for reasoning over route distributions. To this end, we contribute two techniques: a relaxation and refinement approach that allows us to efficiently and compactly compile routes corresponding to real-world road networks, and a one-pass route sampling technique. We demonstrated the effectiveness of our approach on a real-world road network and observed around $91\times$ smaller {\prob}, $10\times$ faster trip sampling runtime and almost $2\times$ the match rate of state-of-the-art probabilistic approach, bringing probabilistic approaches closer to achieving comparable runtime to traditional routing tools. 

\section*{Acknowledgments} \label{sect:acks}

We sincerely thank Yong Lai for the insightful discussions. We sincerely thank reviewers for the constructive feedback. Suwei Yang is supported by the Grab-NUS AI Lab, a joint collaboration between GrabTaxi Holdings Pte. Ltd., National University of Singapore, and the Industrial Postgraduate Program (Grant: S18-1198-IPP-II) funded by the Economic Development Board of Singapore. Kuldeep S. Meel is supported in part by National Research Foundation Singapore under its NRF Fellowship Programme(NRF-NRFFAI1-2019-0004), Ministry of Education Singapore Tier 2 grant (MOE-T2EP20121-0011), and Ministry of Education Singapore Tier 1 Grant (R-252-000-B59-114).

\label{sect:bib}
\bibliographystyle{plain}
\bibliography{references.bib}

\begin{thebibliography}{10}

\bibitem{shortest-path-problem}
Ravindra~K Ahuja, Kurt Mehlhorn, James Orlin, and Robert~E Tarjan.
\newblock Faster algorithms for the shortest path problem.
\newblock {\em Journal of the ACM (JACM)}, 37(2):213--223, 1990.

\bibitem{ride-sharing-pricing}
Siddhartha Banerjee, Carlos Riquelme, and R.~Johari.
\newblock Pricing in ride-share platforms: A queueing-theoretic approach.
\newblock {\em Econometrics: Econometric \& Statistical Methods - Special
  Topics eJournal}, 2015.

\bibitem{B17}
Geoff Boeing.
\newblock Osmnx: New methods for acquiring, constructing, analyzing, and
  visualizing complex street networks.
\newblock {\em Econometrics: Computer Programs \& Software eJournal}, 2017.

\bibitem{Boole54}
George Boole.
\newblock An investigation of the laws of thought: On which are founded the
  mathematical theories of logic and probabilities.
\newblock 1854.

\bibitem{obdd-bryant}
Randal~E. Bryant.
\newblock Graph-based algorithms for boolean function manipulation.
\newblock {\em IEEE Trans. Comput.}, 35(8):677–691, 1986.

\bibitem{psdd-route}
Arthur Choi, Yujia Shen, and Adnan Darwiche.
\newblock Tractability in structured probability spaces.
\newblock In {\em NeurIPS}, volume~30, pages 3477--3485, 2017.

\bibitem{psdd-structure-spaces-ChoiKRR15}
Arthur Choi, Guy Van~den Broeck, and Adnan Darwiche.
\newblock Probability distributions over structured spaces.
\newblock In {\em AAAI}, 2015.

\bibitem{probabilistic-circuits-book}
YooJung Choi, A.~Vergari, and Guy~Van den Broeck.
\newblock Probabilistic circuits: A unifying framework for tractable
  probabilistic models.
\newblock 2020.

\bibitem{ride-hailing-rise}
Regina~R. Clewlow and G.~Mishra.
\newblock Disruptive transportation: The adoption, utilization, and impacts of
  ride-hailing in the united states.
\newblock 2017.

\bibitem{food-delivery-rise}
Jack Collison.
\newblock The impact of online food delivery services on restaurant sales.
\newblock 2020.

\bibitem{sdd}
Adnan Darwiche.
\newblock Sdd: A new canonical representation of propositional knowledge bases.
\newblock In {\em IJCAI}, 2011.

\bibitem{kc-map-darwiche}
Adnan Darwiche and P.~Marquis.
\newblock A knowledge compilation map.
\newblock {\em J. Artif. Intell. Res.}, 17:229--264, 2002.

\bibitem{MLD}
Daniel Delling, Andrew Goldberg, Thomas Pajor, and Renato Werneck.
\newblock Customizable route planning.
\newblock In {\em Proceedings of the 10th International Symposium on
  Experimental Algorithms}, 2011.

\bibitem{Contraction-Hierarchies}
Robert Geisberger, Peter Sanders, Dominik Schultes, and Christian Vetter.
\newblock Exact routing in large road networks using contraction hierarchies.
\newblock {\em Transp. Sci.}, 2012.

\bibitem{HSS08}
Aric~A. Hagberg, Daniel~A. Schult, and Pieter Swart.
\newblock Exploring network structure, dynamics, and function using networkx.
\newblock In {\em Proceedings of the 7th Python in Science Conference}, pages
  11 -- 15, 2008.

\bibitem{numpy}
C.~Harris, K.~J. Millman, S.~Walt, Ralf Gommers, P.~Virtanen, D.~Cournapeau,
  E.~Wieser, J.~Taylor, S.~Berg, N.~Smith, R.~Kern, Matti Picus, S.~Hoyer,
  M.~Kerkwijk, Matthew Brett, Allan Haldane, Jaime~Fern'andez del R'io, Mark
  Wiebe, P.~Peterson, Pierre G'erard-Marchant, K.~Sheppard, T.~Reddy,
  W.~Weckesser, H.~Abbasi, Christoph Gohlke, and T.~E. Oliphant.
\newblock Array programming with numpy.
\newblock {\em Nature}, 585:357 -- 362, 2020.

\bibitem{Graphillion}
Takeru Inoue, Hiroaki Iwashita, Jun Kawahara, and Shin ichi Minato.
\newblock Graphillion: software library for very large sets of labeled graphs.
\newblock {\em International Journal on Software Tools for Technology
  Transfer}, 2016.

\bibitem{KIIM17}
Jun Kawahara, Takeru Inoue, Hiroaki Iwashita, and Shin ichi Minato.
\newblock Frontier-based search for enumerating all constrained subgraphs with
  compressed representation.
\newblock {\em IEICE Trans. Fundam. Electron. Commun. Comput. Sci.}, 2017.

\bibitem{K05}
Donald~Ervin Knuth.
\newblock The art of computer programming, volume 4, fascicle 2: Generating all
  tuples and permutations.
\newblock 2005.

\bibitem{OBDDand-lai}
Yong Lai, Dayou Liu, and Minghao Yin.
\newblock New canonical representations by augmenting obdds with conjunctive
  decomposition.
\newblock {\em Journal of Artificial Intelligence Research}, 58:453--521, 2017.

\bibitem{ROAD}
Ken C.~K. Lee, Wang-Chien Lee, Baihua Zheng, and Yuan Tian.
\newblock Road: A new spatial object search framework for road networks.
\newblock {\em IEEE Transactions on Knowledge and Data Engineering}, 2012.

\bibitem{ride-sharing-route-preference}
J.~Letchner, John Krumm, and E.~Horvitz.
\newblock Trip router with individualized preferences (trip): Incorporating
  personalization into route planning.
\newblock In {\em AAAI}, 2006.

\bibitem{MRS08}
Christopher~D. Manning, Prabhakar Raghavan, and Hinrich Sch{\"u}tze.
\newblock Introduction to information retrieval.
\newblock 2008.

\bibitem{aomdd}
R.~Mateescu, R.~Dechter, and Radu Marinescu.
\newblock And/or multi-valued decision diagrams (aomdds) for graphical models.
\newblock {\em J. Artif. Intell. Res.}, 2008.

\bibitem{OpenStreetMap}
{OpenStreetMap contributors}.
\newblock {Planet dump retrieved from https://planet.osm.org }.
\newblock \url{ https://www.openstreetmap.org }, 2017.

\bibitem{pytorch}
Adam Paszke, S.~Gross, Francisco Massa, A.~Lerer, J.~Bradbury, G.~Chanan,
  T.~Killeen, Z.~Lin, N.~Gimelshein, L.~Antiga, Alban Desmaison, Andreas
  K{\"o}pf, E.~Yang, Zach DeVito, Martin Raison, Alykhan Tejani, Sasank
  Chilamkurthy, B.~Steiner, Lu~Fang, Junjie Bai, and Soumith Chintala.
\newblock Pytorch: An imperative style, high-performance deep learning library.
\newblock In {\em NeurIPS}, 2019.

\bibitem{einsum-networks}
Robert Peharz, Steven Lang, A.~Vergari, Karl Stelzner, Alejandro Molina,
  M.~Trapp, Guy~Van den Broeck, K.~Kersting, and Zoubin Ghahramani.
\newblock Einsum networks: Fast and scalable learning of tractable
  probabilistic circuits.
\newblock In {\em ICML}, 2020.

\bibitem{spn}
Hoifung Poon and Pedro~M. Domingos.
\newblock Sum-product networks: A new deep architecture.
\newblock {\em 2011 IEEE International Conference on Computer Vision Workshops
  (ICCV Workshops)}, pages 689--690, 2011.

\bibitem{ham-path-encoding}
S.~Prestwich.
\newblock Sat problems with chains of dependent variables.
\newblock {\em Discret. Appl. Math.}, 130:329--350, 2003.

\bibitem{tsp-problem}
Daniel~J Rosenkrantz, Richard~Edwin Stearns, and Philip~M Lewis.
\newblock Approximate algorithms for the traveling salesperson problem.
\newblock In {\em 15th Annual Symposium on Switching and Automata Theory (swat
  1974)}, pages 33--42. IEEE, 1974.

\bibitem{conditional-psdd-sbn}
Yujia Shen, Arthur Choi, and Adnan Darwiche.
\newblock Conditional psdds: Modeling and learning with modular knowledge.
\newblock In {\em AAAI}, 2018.

\bibitem{structured-baysian-networks}
Yujia Shen, Anchal Goyanka, Adnan Darwiche, and Arthur Choi.
\newblock Structured bayesian networks: From inference to learning with routes.
\newblock In {\em AAAI}, 2019.

\bibitem{toposort}
Eric~V. Smith.
\newblock toposort, 2022.

\bibitem{T83}
G.~S. Tseitin.
\newblock On the complexity of derivation in propositional calculus.
\newblock 1983.

\bibitem{ride-sharing-eta}
Zheng Wang, Kun Fu, and Jieping Ye.
\newblock Learning to estimate the travel time.
\newblock {\em Proceedings of the 24th ACM SIGKDD International Conference on
  Knowledge Discovery \& Data Mining}, 2018.

\bibitem{pyroutelib}
Oliver White and Mikolaj Kuranowski.
\newblock Pyroutelib3, 2017.

\bibitem{YLM22}
Suwei Yang, Victor Liang, and Kuldeep~S. Meel.
\newblock Inc: A scalable incremental weighted sampler.
\newblock In {\em FMCAD 2022}, page 205, 2022.

\end{thebibliography}

\clearpage
\appendix
\section*{Appendix}

\subsection*{Smoothing}

\begin{algorithm}[htb]
	\begin{flushleft}
		\textbf{Input}: {\prob} $\psi$\\
	\end{flushleft}
	\begin{algorithmic}[1] %
		\FOR{node $n$ in bottom-up pass of $\psi$}
			\IF{$n$ is $\top$ or $\bot$ node} 
				\STATE $\varset{n} \longleftarrow \emptyset$
			\ELSIF{$n$ is $\land$ node}
				\STATE $\varset{n} \longleftarrow \bigcup_{n_{c} \in \child{n}} \varset{n_{c}}$ 
			\ELSE
				\STATE $\delta \longleftarrow \mathsf{createDecisionNodes}(\varset{\lo{n}} \setminus \varset{\hi{n}})$ \label{app:line:smoothing-start}
				\STATE $c \longleftarrow \mathsf{conjunctionNode}()$
				\STATE $\child{c} \longleftarrow \{ \hi{n} \cup \delta \}$
				\STATE $\hi{n} \longleftarrow c$ \label{app:line:smoothing-end}

				\STATE $\delta \longleftarrow \mathsf{createDecisionNodes}(\varset{\hi{n}} \setminus \varset{\lo{n}})$
				\STATE $c \longleftarrow \mathsf{conjunctionNode}()$
				\STATE $\child{c} \longleftarrow \{ \lo{n} \cup \delta \}$
				\STATE $\lo{n} \longleftarrow c$
				\STATE $\varset{n} \longleftarrow {\varset{\lo(n)} \cup \varset{\hi{n}}}$
			\ENDIF
		\ENDFOR
	\end{algorithmic}
	\caption{$\mathsf{Smooth}$ - performs smoothing on a {\prob}}
	\label{app:alg:smooth}
\end{algorithm}

An important property to enable a {\prob} to learn the correct distribution is smoothness. A non-smooth {\prob} could be missing certain parameters. An example would be if we have an assignment $\tau_1 = [\neg x, y, -z]$, $\psi_1$ in Figure \ref{fig:Prob-non-smooth} (main paper) will not have a counter for $\neg z$ as the traversal ends after reaching the true node from the decision node representing variable $y$. Observe that the above-mentioned issue would not occur in a smooth {\prob} $\psi_2$ in Figure \ref{app:fig:smooth-prob-example}. If a {\prob} is not smooth, we can perform smoothing. For a given decision node $n$ (for consistency with algorithm~\ref{app:alg:smooth}), if $\varset{\lo{{n}}} \setminus \varset{\hi{{n}}} \neq \emptyset$ we can augment the hi branch of $n$ with the missing variables as shown in lines~\ref{app:line:smoothing-start} - \ref{app:line:smoothing-end} of algorithm~\ref{app:alg:smooth}. We create a new conjunction node $c$ with decision nodes representing each missing variable in $\varset{\lo{{n}}} \setminus \varset{\hi{{n}}}$ and $\hi{{n}}$ as children. For each additional decision node created for $\varset{\lo{{n}}} \setminus \varset{\hi{{n}}}$, both branches lead to the true node. We reassign $\hi{{n}}$ to be $c$. Similarly, we repeat the operation for $\lo{n}$. Once the smoothing operation is performed on every decision node in $\psi$, $\psi$ will have the smoothness property.

\begin{figure}[htb]
	\centering
	\begin{tikzpicture}[
	roundnode/.style={circle, draw=black!60, very thick, minimum size=7mm},
	]
	\node[roundnode](x) at (0, 0){$x$};
	\node[roundnode](left-x-and) at (-2, -1){$\land$};
	\node[roundnode](right-x-and) at (2, -1){$\land$};
	\node[roundnode](y) at (-1, -2){$y$};
	\node[roundnode](z) at (1, -2){$z$};
	\node[roundnode](z-c) at (-3, -2){$z$};
	\node[roundnode](y-c) at (3, -2){$y$};
	\node[roundnode](t) at (1, -3.5){$\top$};
	\node[roundnode](f) at (-1, -3.5){$\bot$};
	
	\draw [dashed,->] (x) -- (left-x-and);
	\draw [->] (x) -- (right-x-and);
	\draw [->] (y) -- (t);
	\draw [dashed,->] (y) -- (f);
	\draw [dashed,->] (z) -- (t);
	\draw [->] (z) -- (f);
	\draw [->] (left-x-and) -- (y);
	\draw [->] (left-x-and) -- (z-c);
	\draw [->] (right-x-and) -- (z);
	\draw [->] (right-x-and) -- (y-c);
	\draw [->] (z-c) to[out=-30,in=170] (t);
	\draw [dashed,->] (z-c) to[out=-10,in=150] (t);
	\draw [->] (y-c) to[out=-110,in=10] (t);
	\draw [dashed,->] (y-c) to[out=-160,in=50] (t);

	\node at (-0.6, 0) {$n1$};
	\node at (-2.6, -1){$n2$};
	\node at (1.4, -1){$n3$};
	\node at (-3.6, -2){$n4$};
	\node at (-1.6, -2){$n5$};
	\node at (0.4, -2){$n6$};
	\node at (2.4, -2){$n7$};
	\node at (-1.6, -3.5){$n8$};
	\node at (1.6, -3.5){$n9$};
	
	\end{tikzpicture}
	\caption{A smooth {\prob} $\psi_2$ with 9 nodes, $n1, ... , n9$, representing $F = (x \lor y) \land (\neg x \lor \neg z)$. Branch parameters are omitted}
	\label{app:fig:smooth-prob-example}
\end{figure}

\clearpage
\subsection*{Probability Computation}

\begin{algorithm}[h!]
	\begin{flushleft}
		\textbf{Input}: {\prob} $\psi$, Assignment $\tau$\\
		\textbf{Output}: probability of $\tau$
	\end{flushleft}
	\begin{algorithmic}[1] %
		\STATE cache $\gamma$ $\longleftarrow$ initCache()
		\FOR{node $n$ in bottom-up pass of $\psi$}
			\IF{$n$ is $\top$ node} 
				\STATE $\gamma[n] \longleftarrow 1$
			\ELSIF{$n$ is $\bot$ node}
				\STATE $\gamma[n] \longleftarrow 0$
			\ELSIF{$n$ is $\land$ node}
				\STATE $\gamma[n] \longleftarrow$ $\prod_{c \in {\child{n}}} \gamma[c]$ \label{line:probinfer-conjunction-node}
			\ELSE
				\IF{${\var{n}}$ in $\tau$}
					\STATE $\gamma[n] \longleftarrow$ $\mathsf{assignProb}$($\hiparam{n}, \loparam{n}, \gamma$) \label{line:probinfer-follow-assignment-decision}
				\ELSE
					\STATE $\gamma[n] \longleftarrow \loparam{n} \times \gamma[{\lo{n}}] + \hiparam{n} \times \gamma[{\hi{n}}]$ \label{line:probinfer-unassigned-decision}
				\ENDIF
			\ENDIF
		\ENDFOR
		\STATE \textbf{return} $\gamma$[rootnode($\psi$)]
	\end{algorithmic}
	\caption{{\probinfer} - returns probability of $\tau$}
	\label{alg:ProbInfer-traversal}
\end{algorithm}

Probability computations are typically performed on decision diagrams in a bottom-up manner, processing child nodes before parent nodes. In this work, we implement a probability computation algorithm ({\probinfer}), which is the joint probability computation component of {\probsample} in the main paper. In line~\ref{line:probinfer-follow-assignment-decision}, the $\gamma$ cache value is the product of branch parameter and child $\gamma$ cache value of the corresponding assignment of $\var{n}$ instead of both possible possible assignments in line~\ref{line:probinfer-unassigned-decision}.

\clearpage
\subsection*{Additional results}

We show in Table~\ref{app:tab:learning-stats} additional $\varepsilon$-match rate statistics on how well {\probroute} performs when provided different amount of data to learn probabilities. As we increase the amount of data provided for learning, in increments of 2000 instances (20\% of 10000 total), there is a general improvement in the match rate of the trips produced by {\probroute}. A similar trend is observed when $\varepsilon=0$, with corresponding stats shown in Table~\ref{app:tab:learning-stats-exact}.

\begin{table*}[htb]
    \centering
    \begin{small}
    \begin{tabular}{l|rrrrrr}
    \toprule
    Stats & \multicolumn{6}{c}{{\probroute}}\\
    & 0\% &20\% & 40\% & 60\% & 80\% & 100\% \\
    \midrule
    Mean  & 0.210 & 0.416 & 0.434 & 0.451 & 0.466 & 0.456 \\
    Std   & 0.192 & 0.360 & 0.373 & 0.376 & 0.383 & 0.386 \\
    25\%  & 0.081 & 0.102 & 0.095 & 0.098 & 0.098 & 0.102 \\
    50\%  & 0.149 & 0.286 & 0.297 & 0.318 & 0.349 & 0.316 \\
    75\%  & 0.257 & 0.715 & 0.854 & 0.964 & 1.000 & 1.000 \\
    \bottomrule
    \end{tabular}
    \end{small}
    \caption{$\varepsilon$-match rate statistics for {\probroute} where $\varepsilon$ is set as median edge length of road network graph $G_r$. The percentages under `Stats' column refer to the percentiles, for example `25\%' row  refer to the 25th percentile match rate for various methods. The percentages under {\probroute} header indicates the percentage of data that {\probroute} has learned from, out of the 10000 learning instances in total.}
    \label{app:tab:learning-stats}
\end{table*}

\begin{table*}[htb]
    \centering
    \begin{small}
    \begin{tabular}{l|rrrrrr}
    \toprule
    Stats & \multicolumn{6}{c}{{\probroute}}\\
    & 0\% &20\% & 40\% & 60\% & 80\% & 100\% \\
    \midrule
    Mean  & 0.192 & 0.404 & 0.422 & 0.440 & 0.455 & 0.445 \\
    Std   & 0.192 & 0.365 & 0.379 & 0.382 & 0.389 & 0.391 \\
    25\%  & 0.063 & 0.080 & 0.075 & 0.078 & 0.077 & 0.082 \\
    50\%  & 0.132 & 0.275 & 0.282 & 0.308 & 0.334 & 0.305 \\
    75\%  & 0.243 & 0.702 & 0.848 & 0.963 & 1.000 & 1.000 \\
    \bottomrule
    \end{tabular}
    \end{small}
    \caption{Exact match rate statistics for {\probroute} where $\varepsilon=0$. The percentages under `Stats' column refer to the percentiles, for example `25\%' row  refer to the 25th percentile match rate for various methods. The percentages under {\probroute} header indicates the percentage of data that {\probroute} has learned from, out of the 10000 learning instances in total.}
    \label{app:tab:learning-stats-exact}
\end{table*}

\end{document}